\documentclass[letterpaper, 10pt, conference]{Classes/ieeeconf}
\IEEEoverridecommandlockouts
\usepackage{Packages/decar-common}
\usepackage{Packages/decar-post}
\usepackage{Packages/myFormat}

\title{\LARGE \bf A Passivity-Based Analysis of First-Order Momentum-Based Methods}

\author{Sepehr Moalemi$^{1}$ and James Richard Forbes$^{2}$
    \thanks{$^{1}$Sepehr Moalemi {\tt\small moalemi@umich.edu} is with the Department of Electrical and Computer Engineering, University of Michigan, Ann Arbor, MI 48109, USA\@. $^{2}$James Richard Forbes {\tt\small james.richard.forbes@mcgill.ca} is with the Department of Mechanical Engineering, McGill University, 817 Sherbrooke St. W., Montreal, QC H3A 0C3, Canada.}
}

\begin{document}
    \maketitle
    \thispagestyle{empty}
    \pagestyle{empty}

    \fontdimen16\textfont2=\fontdimen17\textfont2
    \fontdimen13\textfont2=5pt
    \begin{abstract}
    This paper presents a discrete-time passivity-based analysis of first-order momentum-based methods for a class of functions whose gradient has lower and upper sector bounds of \(0\) and \(L\), respectively. Through a loop transformation, it is shown that momentum-based methods can be represented as a passive controller in negative feedback with an output strictly passive (OSP) system. The weak passivity theorem is then used to derive explicit hyperparameter conditions under which the shifted gradient asymptotically vanishes. Under an additional assumption that requires the existence of a unique stationary point and excludes arbitrarily small gradients far from that point, convergence of the iterates to the global minimizer is established.
\end{abstract}
\begin{keywords}
    passivity-based control, optimization
\end{keywords}

    \section{Introduction}
Consider the unconstrained optimization problem
\begin{equation} \label{eq:optimization_problem}
    \min_{\mbf{x} \in \mathbb{R}^{n}} f(\mbf{x}),
\end{equation}
where \(\mbox{$f: \mathbb{R}^n \rightarrow \mathbb{R}$}\) is a continuously differentiable function. Assume there exists a minimizer \(\mbf{x}^\ast\) such that \(\nabla f(\mbf{x}^\ast) = \mbf{0}\). The function \(f\) is said to have a sector-bounded gradient if there exist \(m \in \mathbb{R}_{\geq 0}\) and \(L \in \mathbb{R}_{>0}\), with \(m \leq L\), such that
\begin{equation} \label{eqn:sector_bound}
    \begin{split}
        \hspace{-5pt}
        \left\langle\mbf{x} - \mbf{x}^\ast, \nabla f(\mbf{x})\right\rangle
        \geq
        \frac{mL}{m + L} \mleft\| \mbf{x} - \mbf{x}^\ast \mright\|_2^2
        +
        \frac{1}{m + L} \mleft\| \nabla f(\mbf{x}) \mright\|_2^2,
    \end{split}
\end{equation}
for all \(\mbf{x} \in \mathbb{R}^n\)~\cite[Proposition 5]{lessard_recht_iqc}. The set of continuously differentiable functions with a minimizer satisfying \cref{eqn:sector_bound} is denoted by \(\mathcal{F}_{m, L}\). This class contains non-convex functions and is more general than the set of \(m\)-strongly convex and \(L\)-smooth functions, denoted by \(\mathcal{S}_{m, L}\)~\cite{hu_lessard}.

Many first-order momentum-based methods for solving \cref{eq:optimization_problem} can be written as
\begin{equation}\label{eqn:recursion_momentum_augmented}
    \begin{aligned}
        \mbf{x}^{k + 1} &= \mbf{x}^{k} + \beta \mleft( \mbf{x}^{k} - \mbf{x}^{k-1} \mright) - \alpha \mbf{u}^{k},\\%
        \mbf{y}^{k} &= \mbf{x}^{k} + \gamma \mleft( \mbf{x}^{k} - \mbf{x}^{k-1} \mright),\\%
        \mbf{u}^{k} &= \nabla f(\mbf{y}^{k}),
    \end{aligned}
\end{equation}
for \(k \in \mathbb{Z}_{\geq 0}\) with initial conditions \(\mbf{x}^{-1}\) and \(\mbf{x}^{0}\), where \(\alpha \in \mathbb{R}_{> 0}\) is the stepsize, \(\beta \in \mathbb{R}_{\geq 0}\) is the momentum parameter, and \(\gamma \in \mathbb{R}_{\geq 0}\) is the look-ahead parameter. The gradient descent~(GD) method is recovered for \(\beta = \gamma = 0\), Polyak's heavy-ball~(HB) method~\cite{polyak_hb} is recovered for \(\beta > 0\) and \(\gamma = 0\), and Nesterov's accelerated gradient (NAG) method~\cite{Nesterov} is recovered for \(\beta > 0\) and \(\gamma = \beta\). Such recursions have been studied extensively through discrete-time feedback and robust control viewpoints.

In the control-theoretic analysis of first-order methods, the sector bounds \cref{eqn:sector_bound} on the gradient are the main assumptions used to guarantee the stability of the feedback interconnection of the method and the gradient. Importantly, for the case of \(f \in \mathcal{S}_{m,L}\), it is possible to have \(\bar{m} \geq m\) and \(\bar{L} \leq L\) such that \(f \in \mathcal{F}_{\bar{m}, \bar{L}}\), meaning that the sector bounds on the gradient can be tightened. In \cite{ugrinovskii}, the circle criterion is used to guarantee the global convergence of the HB method for functions in \(\mathcal{F}_{m, L}\). In~\cite{alex_petersen}, this framework is extended by considering \(\gamma \neq 0\) in \cref{eqn:recursion_momentum_augmented}, yielding the generalized accelerated gradient (GAG) method. In~\cite{hu_lessard_dissipativity_nag}, for \(f \in \mathcal{S}_{m, L}\), dissipativity theory is used to analyze the NAG method by linking convergence-rate certificates to energy dissipation, enabling Lyapunov-function construction via semidefinite programming. Integral quadratic constraints (IQCs) are used in \cite{triple_momentum} to introduce the triple momentum (TM) method for functions in \(\mathcal{S}_{m, L}\). This result is extended in \cite{van2025fastest} for functions in \(\mathcal{S}_{m, L}\) that are also twice continuously differentiable, resulting in the \(C^2\)-momentum (C2M) method. Finally, for \(f \in \mathcal{F}_{m,L}\), \cite{moalemi_forbes_gd_passive_acc} shows that, after a loop transformation, the GD method can be represented as a passive controller in negative feedback with a very strictly passive system, after which the passivity theorem guarantees global convergence of the method.

Importantly, the analyses in \cite{ugrinovskii, alex_petersen, hu_lessard_dissipativity_nag, triple_momentum, van2025fastest, moalemi_forbes_gd_passive_acc} assume \(m>0\), whether through \(f \in \mathcal{S}_{m,L}\) or \(f \in \mathcal{F}_{m,L}\). In practice, estimating \(m\) can be difficult, or \(m\) may be zero. This paper extends the passivity-based analysis of GD in~\cite{moalemi_forbes_gd_passive_acc} to first-order momentum-based methods applied to functions in \(\mathcal{F}_{0,L}\). Explicit hyperparameter conditions are derived that guarantee asymptotic vanishing of the shifted gradient via the weak passivity theorem. In general, this does not imply convergence of the iterates to the global minimizer. As such, a stronger convergence result is established under an additional assumption requiring the existence of a unique stationary point and excluding arbitrarily small gradients far from that point.

The remainder of this paper is organized as follows. \Cref{sec:preliminaries} introduces the preliminaries and the control interpretation of first-order momentum-based methods. \Cref{sec:main_results} presents the loop transformation, the passivity-based analysis, and the main stability results. \Cref{sec:discussion} discusses the resulting convergence conditions and compares the GD, HB, and NAG cases with existing results in the literature. Numerical results are presented in \Cref{sec:simulation}, followed by closing remarks in \Cref{sec:closing_remarks}.

    \section{Preliminaries} \label{sec:preliminaries}
\subsection{Notation}
    A positive definite matrix is denoted by \(\mbf{A} \succ 0\) and a negative semidefinite matrix is denoted by \(\mbf{A} \preceq 0\). For a function \(\mbf{u} : \mathbb{Z}_{\geq 0} \to \mathbb{R}^{n}\) and \(T \in \mathbb{Z}_{\geq 0}\), its truncation, \(\mbf{u}_T\), is defined as \(\mbf{u}_T(k) = \mbf{u}(k)\) for \(k < T\) and \(\mbf{u}_T(k) = \mbf{0}\) for \(k \geq T\). The truncated inner product over the discrete-time interval \(\mathcal{T} = \mleft\{0, 1, \hdots, T-1 \mright\}\) is defined as \(\langle \mbf{u}, \mbf{y}\rangle_T = \langle\mbf{u}_T, \mbf{y}_T\rangle = \sum_{k \in \mathcal{T}} \mbf{u}^{\trans}(k) \mbf{y}(k), \forall T \in \mathbb{Z}_{> 0}\). Define \(\ell_{2e}\) as the set of all functions \(\mbf{u} : \mathbb{Z}_{\geq 0} \to \mathbb{R}^{n}\) such that \(\mleft\| \mbf{u} \mright\|_{2T}^2 = \langle \mbf{u}, \mbf{u}\rangle_T < \infty\), for all \(T \in \mathbb{Z}_{>0}\). Moreover, \(\mbf{u} \in \ell_{2}\) if \(\mleft\| \mbf{u} \mright\|_{2}^2 = \langle \mbf{u}, \mbf{u}\rangle < \infty\). The operator \(\bm{\mathcal{G}} : \ell_{2e} \to \ell_{2e}\) is \(\ell_{2}\)-stable if \(\bm{\mathcal{G}}\mbf{u} \in \ell_{2}\) for all \(\mbf{u} \in \ell_{2}\). The identity and zero matrices are denoted by \(\eye\) and \(\mbf{0}\), respectively.

\subsection{Control Interpretation of Optimization Algorithms}
    This section presents the discrete-time feedback representation of first-order optimization algorithms as discussed in~\mbox{\cite[Section 2]{lessard_recht_iqc}}.

    \begin{figure}[t]
        \centering
        \vspace{1pt}
        \includegraphics{Figures/Tikz/momentum_controller.tikz}
        \caption{The negative feedback interconnection between the controller \(\bm{\mathcal{G}}\), whose state-space realization is given in \cref{eqn:momentum_controller_minimal_realization}, and the shifted-gradient nonlinearity, which maps zero inputs to zero outputs.}
        \label{fig:momentum_controller}
        \vspace{-10pt}
    \end{figure}
    Using the shifted-augmented state
    \begin{equation}\label{eqn:augmented_state}
        \mbs{\xi}^{k} = 
        \begin{bmatrix}
            \mbf{x}^{k} - \mbf{x}^\ast \\
            \mbf{x}^{k-1} - \mbf{x}^\ast
        \end{bmatrix},
    \end{equation}
    where \(\mbf{x}^\ast\) is a minimizer of \(f\), the recursion in \cref{eqn:recursion_momentum_augmented} can be interpreted as a static memoryless nonlinearity \(\mbs{\Delta}\) in negative feedback with a linear time-invariant (LTI) dynamical system \(\bm{\mathcal{G}}\), as shown in \Cref{fig:momentum_controller}. A state-space realization of the controller \(\bm{\mathcal{G}}\) is given by
    \begin{equation} \label{eqn:momentum_controller_minimal_realization}
        \left[
            \begin{array}{c|c}
                    \mbf{A} & \mbf{B}   \\ \noalign{\hrule}
                    \mbf{C} & \mbf{D}
            \end{array}
        \right]
        =
        \left[
            \begin{array}{cc|c}
                    1 + \beta  &  -\beta  & \alpha  \\
                    1            & 0      & 0    \\ \noalign{\hrule}
                    1 + \gamma &  -\gamma & 0
            \end{array}
        \right] \otimes \eye,
    \end{equation}
    where \(\otimes\) denotes the Kronecker product. Moreover, \(\mbs{\Delta}\) maps \(\mbf{u}_2\) to \(\mbf{y}_2\) according to \(\mbf{y}_{2}^{k} = \nabla f(\mbf{u}_{2}^{k} + \mbf{x}^\ast)\), so that \(\mbf{u}_{2}^{k} = \mbf{0}\) implies \(\mbf{y}_{2}^{k} = \nabla f(\mbf{x}^\ast) = \mbf{0}\).
    
    Notably, for \(\gamma = 0\) and \(\gamma = \beta\), the controller \(\bm{\mathcal{G}}\) in \cref{eqn:momentum_controller_minimal_realization} reduces to the HB and NAG controllers, respectively. Moreover, for particular choices of \(\alpha\), \(\beta\), and \(\gamma\), the controller \(\bm{\mathcal{G}}\) can also represent methods such as TM, GAG, and C2M.

\subsection{Weak Passivity Theorem and Definition}
    This section presents the weak passivity theorem within the context of the feedback interconnection of the controller \(\bm{\mathcal{G}}\) and the shifted gradient \(\mbs{\Delta}\) as shown in \Cref{fig:momentum_controller}.
    \begin{definition}[Passivity~{\cite{feedback_systems}}] \label{def:passivity}
        Consider a square system represented by an operator \(\bm{\mathcal{H}}:\ell_{2e}\to\ell_{2e}\) that maps the input \(\mbf{u}\) to the output \(\mbf{y}\). The system \(\bm{\mathcal{H}}\) is 
        \begin{enumerate}[label=\textit{\roman*})]
            \item {%
            \emph{passive} if \(\exists \mu  \in \mathbb{R}_{\leq 0} \) such that
            \begin{equation*}
                \langle \mbf{u}, \mbf{y}\rangle_T 
                \geq
                \mu
                , \quad \forall \mbf{u} \in \ell_{2e},\,\forall T \in \mathbb{Z}_{>0},
            \end{equation*}
            }%
            \item {%
            \emph{output strictly passive (OSP)} if \(\exists \mu \in \mathbb{R}_{\leq 0} \) and \(\exists \varepsilon \in \mathbb{R}_{>0} \) such that
            \begin{equation*}
                \langle \mbf{u}, \mbf{y}\rangle_T 
                \geq
                \mu + \varepsilon \mleft\| \mbf{y} \mright\|^{2}_{2T}
                , \quad \forall \mbf{u} \in \ell_{2e},\,\forall T \in \mathbb{Z}_{>0}.
            \end{equation*}
            }%
        \end{enumerate}
    \end{definition}
    \vspace{2pt}
    A special case of the passivity theorem, known as the weak passivity theorem, is given below.
    \begin{theorem}[Weak passivity theorem~{\cite{van_der_schaft}}]\label{thrm:weak_passivity}
        Consider two systems \(\bm{\mathcal{G}} : \ell_{2e} \to \ell_{2e}\) and \(\mbs{\Delta} : \ell_{2e} \to \ell_{2e}\) in negative feedback as per \Cref{fig:momentum_controller} with \(\mbf{r}_1 = \mbf{0}\). If \(\bm{\mathcal{G}}\) is passive, \(\mbs{\Delta}\) is OSP, and \(\mbf{r}_2 \in \ell_{2}\), then \(\mbf{y}_2 \in \ell_{2}\).
    \end{theorem}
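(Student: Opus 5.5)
The plan is to prove the weak passivity theorem by writing out the interconnection equations of \Cref{fig:momentum_controller}, adding the two passivity inequalities over a finite horizon, and then letting the horizon go to infinity. With \(\mbf{r}_1 = \mbf{0}\), negative feedback gives
\[
\mbf{u}_1 = -\mbf{y}_2, \qquad \mbf{u}_2 = \mbf{r}_2 + \mbf{y}_1 ,
\]
with \(\mbf{y}_1 = \bm{\mathcal{G}}\mbf{u}_1\) and \(\mbf{y}_2 = \mbs{\Delta}\mbf{u}_2\). I assume the interconnection is well-posed, meaning all signals lie in \(\ell_{2e}\). This is immediate for the momentum loop, since \(\mbf{D} = \mbf{0}\) in \cref{eqn:momentum_controller_minimal_realization} and the loop is therefore explicitly computable. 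Under this assumption, \Cref{def:passivity} applies to both subsystems for every \(T \in \mathbb{Z}_{>0}\).

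First, passivity of \(\bm{\mathcal{G}}\) gives \(\langle \mbf{u}_1, \mbf{y}_1\rangle_T \ge \mu_1\), that is, \(-\langle \mbf{y}_2, \mbf{y}_1\rangle_T \ge \mu_1\). Second, OSP of \(\mbs{\Delta}\) gives
\[
\langle \mbf{u}_2, \mbf{y}_2\rangle_T = \langle \mbf{r}_2, \mbf{y}_2\rangle_T + \langle \mbf{y}_1, \mbf{y}_2\rangle_T \ge \mu_2 + \varepsilon \|\mbf{y}_2\|_{2T}^2 .
\]
Adding the two inequalities cancels the cross term \(\langle \mbf{y}_1, \mbf{y}_2\rangle_T\), which leaves
\[
\varepsilon \|\mbf{y}_2\|_{2T}^2 \le \langle \mbf{r}_2, \mbf{y}_2\rangle_T - \mu_1 - \mu_2 .
\]

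Next I bound the right-hand side. By Cauchy--Schwarz on truncations and Young's inequality,
\[
\langle \mbf{r}_2, \mbf{y}_2\rangle_T \le \|\mbf{r}_2\|_{2T}\|\mbf{y}_2\|_{2T} \le \tfrac{1}{2\varepsilon}\|\mbf{r}_2\|_{2T}^2 + \tfrac{\varepsilon}{2}\|\mbf{y}_2\|_{2T}^2 .
\]
Absorbing the last term into the left-hand side gives
\[
\tfrac{\varepsilon}{2}\|\mbf{y}_2\|_{2T}^2 \le \tfrac{1}{2\varepsilon}\|\mbf{r}_2\|_{2}^2 - \mu_1 - \mu_2 .
\]
The right-hand side is finite because \(\mbf{r}_2 \in \ell_2\), and it is independent of \(T\). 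The sequence \(\|\mbf{y}_2\|_{2T}^2\) is nondecreasing in \(T\) and uniformly bounded, so it converges, and hence \(\mbf{y}_2 \in \ell_2\).

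The only delicate point is justifying that the passivity inequalities hold for the signals actually produced by the loop. This requires well-posedness, so that \(\mbf{u}_1,\mbf{u}_2 \in \ell_{2e}\), together with the fact that the constants \(\mu_i\) do not depend on the input. For the system in \Cref{fig:momentum_controller} this is handled by the strictly proper realization \cref{eqn:momentum_controller_minimal_realization}, and the remaining steps are routine.
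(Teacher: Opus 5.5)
Your proof is correct. The paper gives no proof of its own and only cites van der Schaft, and the standard argument there is the one you give: sum the two dissipation inequalities so the cross term \(\langle \mbf{y}_1, \mbf{y}_2\rangle_T\) cancels, absorb \(\langle \mbf{r}_2, \mbf{y}_2\rangle_T\) via Cauchy--Schwarz and Young's inequality, then let \(T \to \infty\) using monotonicity. Your explicit well-posedness assumption is also the right caveat, since in discrete time every sequence lies in \(\ell_{2e}\) and the only real issue is solvability of the loop equations.
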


    \section{Analysis} \label{sec:main_results}
This section extends the passivity-based analysis of the GD method in~\cite{moalemi_forbes_gd_passive_acc} to first-order momentum-based methods of the form~\cref{eqn:recursion_momentum_augmented}.
\subsection{Loop Transformation}
    \begin{figure}[t]
        \centering
        \vspace{1pt}
        \includegraphics{Figures/Tikz/momentum_loop_transformation.tikz}
        \caption{Loop transformation of the negative feedback interconnection in \Cref{fig:momentum_controller}. The loop transformation introduces a feedthrough term \(D\eye\), resulting in the modified controller \(\bar{\bm{\mathcal{G}}}\), the modified nonlinearity \(\bar{\mbs{\Delta}}\), and \(\bar{\mbf{r}}_2^k = \mbf{r}_2^k - D\mbf{r}_1^k\).}
        \vspace{-20pt}
        \label{fig:momentum_loop_transformation}
    \end{figure}
    Since the controller \(\bm{\mathcal{G}}\) in \cref{eqn:momentum_controller_minimal_realization} is strictly proper (\(\mbf{D} = \mbf{0}\)), it can never be passive~\mbox{\cite[Remark 2.7]{Byrnes_Lin_Losslessness}}. Therefore, the loop transformation in \Cref{fig:momentum_loop_transformation} is performed to introduce a constant feedthrough term \(\mbf{D} = D\eye\), where \(D \in \mathbb{R}_{>0}\). This loop transformation results in the modified controller \(\bar{\bm{\mathcal{G}}}\) and the modified nonlinearity \(\bar{\mbs{\Delta}}\), but does not change the closed-loop dynamics of the system, meaning \Cref{fig:momentum_loop_transformation} still represents the recursion in \cref{eqn:recursion_momentum_augmented}.
\subsection{Passivity of \texorpdfstring{\(\bar{\bm{\mathcal{G}}}\)}{G} and \texorpdfstring{\(\bar{\mbs{\Delta}}\)}{Delta}}
    This section outlines the passivity properties of \(\bar{\bm{\mathcal{G}}}\) and \(\bar{\mbs{\Delta}}\) in two separate lemmas. First, it is shown that \(\bar{\mbs{\Delta}}\) is OSP for any function in \(\mathcal{F}_{0, L}\), provided \(D < 1/L\). Then, it is shown that \(\bar{\bm{\mathcal{G}}}\) is passive, provided \(D\) is sufficiently large.

    The following lemma is adapted from \cite[Lemma 4]{moalemi_forbes_gd_passive_acc} for the case of functions in \(\mathcal{F}_{0, L}\).
    \begin{lemma} \label{lem:modified_nonlinearity_passivity}
        Consider the modified nonlinearity \(\bar{\mbs{\Delta}}\) in \Cref{fig:momentum_loop_transformation}, where \(f \in \mathcal{F}_{0, L}\). 
        Given a feedback term \(D\eye\) with \(D \in \mathbb{R}_{>0}\), if \(D < 1/L\), the positive feedback interconnection of \(\mbs{\Delta}\) and \(D\eye\) results in an OSP system \(\bar{\mbs{\Delta}}\). 
    \end{lemma}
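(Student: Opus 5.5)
We first pin down the input--output map of \(\bar{\mbs{\Delta}}\), then convert the sector bound \cref{eqn:sector_bound} with \(m=0\) into a pointwise-in-time inequality, and finally sum over the horizon. With feedback \(D\eye\) in positive feedback around \(\mbs{\Delta}\), the modified nonlinearity maps \(\bar{\mbf{u}}\) to \(\bar{\mbf{y}}\) with \(\bar{\mbf{y}} = \mbs{\Delta}(\mbf{u}_2)\) and \(\mbf{u}_2 = \bar{\mbf{u}} + D\bar{\mbf{y}}\). Equivalently, \(\bar{\mbf{u}}^k = \mbf{u}_2^k - D\mbf{y}_2^k\) and \(\bar{\mbf{y}}^k = \mbf{y}_2^k\), where \(\mbf{y}_2^k = \nabla f(\mbf{u}_2^k+\mbf{x}^\ast)\).

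The first thing to check is that this interconnection is well posed, so that \(\bar{\mbs{\Delta}}\) is a genuine operator on \(\ell_{2e}\). For each \(k\) we need \(\mbf{u}_2 - D\nabla f(\mbf{u}_2 + \mbf{x}^\ast) = \bar{\mbf{u}}\) to have a unique solution \(\mbf{u}_2\) for every \(\bar{\mbf{u}}\). I would argue this as in \cite[Lemma 4]{moalemi_forbes_gd_passive_acc}: the relevant map is \(\mbf{z}\mapsto \mbf{z} - D\nabla f(\mbf{z}+\mbf{x}^\ast)\) with \(DL<1\). This step is the main obstacle, because sector bounds relative to \(\mbf{x}^\ast\) alone do not give Lipschitz continuity of \(\nabla f\), so a contraction argument is not automatic. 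If the adapted lemma does not cover this, I would take well-posedness of the loop transformation as implicit, as the paper's description of \Cref{fig:momentum_loop_transformation} suggests. In any case the passivity inequality itself only needs the pointwise relation between \(\bar{\mbf{u}}^k\) and \(\bar{\mbf{y}}^k\).

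The key computation uses \cref{eqn:sector_bound} with \(m=0\), which reads \(\langle \mbf{x}-\mbf{x}^\ast,\nabla f(\mbf{x})\rangle \ge \tfrac1L\|\nabla f(\mbf{x})\|_2^2\). Apply it at \(\mbf{x} = \mbf{u}_2^k + \mbf{x}^\ast\). Then
\begin{equation*}
(\bar{\mbf{u}}^k)^{\trans}\bar{\mbf{y}}^k = (\mbf{u}_2^k)^{\trans}\mbf{y}_2^k - D\|\mbf{y}_2^k\|_2^2 \ge \mleft(\tfrac1L - D\mright)\|\bar{\mbf{y}}^k\|_2^2 .
\end{equation*}
Summing over \(k\in\mathcal{T}\) gives \(\langle \bar{\mbf{u}},\bar{\mbf{y}}\rangle_T \ge \varepsilon\|\bar{\mbf{y}}\|_{2T}^2\) with \(\mu=0\) and \(\varepsilon = 1/L - D\). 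This \(\varepsilon\) is strictly positive exactly because \(D<1/L\). By \Cref{def:passivity}, \(\bar{\mbs{\Delta}}\) is therefore OSP.
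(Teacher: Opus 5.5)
Your proof is correct and follows essentially the same route as the paper: apply the \(m=0\) sector bound at \(\mbf{u}_2^k + \mbf{x}^\ast\), subtract \(D\|\mbf{y}_2\|_{2T}^2\), and read off \(\mu = 0\) and \(\varepsilon = 1/L - D\). Your well-posedness remark is extra care that the paper leaves implicit; as you note, the inequality only needs the relation \(\bar{\mbf{u}}_2 = \mbf{u}_2 - D\mbf{y}_2\) along the loop's signals, so it does not affect the conclusion.
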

    \begin{proof}
        From \cref{eqn:sector_bound} with \(m = 0\), it follows that
        \begin{align*}
            \left\langle \bar{\mbf{u}}_2, \mbf{y}_2 \right\rangle_{T}
            =
            \left\langle \mbf{u}_2, \mbf{y}_2 \right\rangle_{T} - D\mleft\| \mbf{y}_2 \mright\|_{2T}^2
            \geq \mleft( \frac{1}{L}  - D \mright) \mleft\| \mbf{y}_2 \mright\|_{2T}^2.
        \end{align*}
        Therefore, if \(D < 1/L\), \(\bar{\mbs{\Delta}}\) is OSP with \(\varepsilon = 1/L - D\) and \(\mu = 0\).\hspace*{\fill}~\QED\par\endtrivlist\unskip
    \end{proof}

    The following lemma provides a sufficient condition for the modified controller \(\bar{\bm{\mathcal{G}}}\) to be passive. The proof is provided in \hyperref[appendix:proof_of_passivity_of_G_bar]{Appendix A}.
    \begin{lemma}\label{lemma:passivity_of_G_bar}
        Given a feedthrough term \(D\eye\) with \(D \in \mathbb{R}_{>0}\), \(\alpha \in  \mathbb{R}_{>0}\), and \(\beta, \gamma \in \mathbb{R}_{\geq 0}\), the modified controller \(\bar{\bm{\mathcal{G}}}\) in \Cref{fig:momentum_loop_transformation} is passive if
        \begin{equation}
                D \geq
            \begin{cases}
                \dfrac{\alpha \mleft(2 \gamma + 1\mright)}{2 \mleft(\beta + 1\mright)}, & \beta < \dfrac{\gamma}{\gamma + 1},\\[2ex]%
                \dfrac{\alpha }{\mleft(1 - \beta\mright)^2} - \dfrac{\alpha \mleft(2 \gamma + 1\mright)}{2 \mleft(1 - \beta\mright)}, & \dfrac{\gamma}{\gamma + 1} < \beta < 1.
            \end{cases}
        \end{equation}
    \end{lemma}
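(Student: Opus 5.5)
The plan is to work in the frequency domain. Since \(\bar{\bm{\mathcal{G}}}\) is the parallel connection of the LTI system \(\bm{\mathcal{G}}\) in \cref{eqn:momentum_controller_minimal_realization} with the feedthrough \(D\eye\), its transfer matrix is \(\bar{G}(z) = (G(z) + D)\eye\). Here
\begin{equation*}
    G(z) = \mbf{C}(z\eye - \mbf{A})^{-1}\mbf{B} = \frac{\alpha\mleft((1+\gamma)z - \gamma\mright)}{(z-1)(z-\beta)}.
\end{equation*}
Because the system is decoupled across coordinates via the Kronecker product, it suffices to treat the scalar case. I would invoke the discrete-time positive real lemma (KYP) to reduce passivity to three conditions: (a) \(\bar G\) has no poles outside the unit disk; (b) poles on the unit circle are simple with positive residue-type condition; (c) \(\operatorname{Re}\bar G(e^{j\omega}) \geq 0\) at all other frequencies. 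Under these conditions there exists a storage \(V(\mbs{\xi}) = \mbs{\xi}^\trans\mbf{P}\mbs{\xi}\) with \(\mbf{P}\succeq 0\) and \(V(\mbs\xi^{k+1}) - V(\mbs\xi^k) \leq 2\,\mbf{u}^{k\trans}\mbf{y}^k\). Summing over \(\mathcal{T}\) then gives passivity in the sense of \Cref{def:passivity} with \(\mu = -\tfrac{1}{2}V(\mbs\xi^0)\).

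For the pole conditions, the characteristic polynomial factors as \((z-1)(z-\beta)\). With \(0\le\beta<1\), the only unit-circle pole is the simple pole at \(z=1\). Its residue is \(\alpha/(1-\beta)>0\), which is the correct sign for a positive real function. This explains why \(\beta<1\) appears in the statement.

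For the real-part condition, I would use the partial fraction decomposition
\begin{equation*}
    G(z) = \alpha\mleft[\frac{1}{(1-\beta)(z-1)} + \frac{b}{z-\beta}\mright],
    \qquad b = \frac{\gamma-(1+\gamma)\beta}{1-\beta}.
\end{equation*}
On the unit circle, \(\operatorname{Re}\,1/(e^{j\omega}-1) = -1/2\). With \(c=\cos\omega\),
\begin{equation*}
    \operatorname{Re}\frac{1}{e^{j\omega}-\beta} = \frac{c-\beta}{1-2\beta c+\beta^2}.
\end{equation*}
Differentiating in \(c\) shows this function is increasing on \([-1,1]\), with derivative proportional to \(1-\beta^2>0\). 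Hence the minimum of \(\operatorname{Re}G\) is attained at an endpoint whose location is set by the sign of \(b\).

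If \(b>0\), i.e.\ \(\beta<\gamma/(\gamma+1)\), the minimum is at \(c=-1\), where the second term equals \(-1/(1+\beta)\). Simplifying gives \(\min\operatorname{Re}G = -\alpha(2\gamma+1)/(2(1+\beta))\), which is the first case. If \(b<0\), the infimum is approached as \(c\to 1\), where the second term tends to \(1/(1-\beta)\). Simplifying gives \(\inf\operatorname{Re}G = \alpha(2\gamma+1)/(2(1-\beta)) - \alpha/(1-\beta)^2\), which is the second case. Requiring \(D+\inf_\omega\operatorname{Re}G(e^{j\omega})\ge 0\) yields exactly the stated bounds.

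I expect the main obstacle to be the rigorous handling of the pole at \(z=1\). At that pole the frequency response is not defined, so the standard KYP lemma, which assumes a Schur-stable \(\mbf{A}\), does not apply directly. I would first try the version of the positive real lemma that allows simple unit-circle poles. If that fails, I would construct \(\mbf{P}\) explicitly by solving the LMI
\begin{equation*}
    \begin{bmatrix}\mbf{A}^\trans\mbf{P}\mbf{A}-\mbf{P} & \mbf{A}^\trans\mbf{P}\mbf{B}-\mbf{C}^\trans\\ \mbf{B}^\trans\mbf{P}\mbf{A}-\mbf{C} & \mbf{B}^\trans\mbf{P}\mbf{B}-2D\end{bmatrix}\preceq 0
\end{equation*}
in the \(2\times 2\) scalar case. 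Here \(\mbf{P}\) is chosen so that \(\mbf{A}^\trans\mbf{P}\mbf{A}-\mbf{P}\) vanishes along the eigenvector of the eigenvalue \(1\). I would then check that the remaining Schur complement condition is equivalent to the stated bound on \(D\).
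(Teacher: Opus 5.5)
Your proposal is correct and reaches the same bounds by a different route.

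\textbf{The paper's route.} The paper stays entirely in the state space and uses only the elementary direction: a \(\mbf{P}\succ0\) with \(\mbf{M}\preceq0\) in \cref{eqn:positive_real} implies passivity. It expands \(\mbf{M}\preceq0\) by principal minors in the scalar case and imposes the equality \cref{eqn:tm_passivity_row_4}. That equality is exactly your observation that \(\mbf{A}^\trans\mbf{P}\mbf{A}-\mbf{P}\) must annihilate the eigenvector \([1\;\,1]^\trans\) of the eigenvalue \(1\). It then takes \cref{eqn:tm_passivity_row_6} with equality, completes a square to force \(p_1=(1+\beta)/\alpha+p_3\), and minimizes the resulting \(D(p_3)\) separately on each side of \(\beta=\gamma/(\gamma+1)\). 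This yields a closed-form \(\mbf{P}\) in each case, so your fallback plan is close to the paper's proof.

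\textbf{Your route.} Your main plan instead computes \(\inf_{\omega\neq0}\mathrm{Re}\,\bar g(e^{j\omega})\) exactly. The partial-fraction coefficient \(b\), the monotonicity in \(\cos\omega\) with derivative proportional to \(1-\beta^2\), and both endpoint evaluations all check out. You then invoke the converse direction of the discrete-time positive real lemma (Hitz--Anderson) for minimal realizations with simple unit-circle poles to obtain \(\mbf{P}\).

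\textbf{Comparison.} Your route has three advantages.
\begin{itemize}
\item It is shorter.
\item It explains the case split as the sign change of the residue \(b\) at \(z=\beta\). The value \(b=0\) is exactly the pole--zero cancellation of \Cref{rmk:case_of_equality_for_beta_gamma}, where your formula gives \(D\ge\alpha(\gamma+1)/2\).
\item Since \(\mathrm{Re}\,\bar g(e^{j\omega})\ge0\) is also necessary for passivity of an LTI system, it shows at once that these bounds are the smallest admissible \(D\). In the paper's route, tightness would require arguing that the equality choices made along the way lose no generality.
\end{itemize}
The paper's route, in exchange, delivers an explicit storage function and needs no converse KYP result for a non-Schur \(\mbf{A}\).

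\textbf{Minimality hypothesis.} If you rely on the PR lemma, state and check its minimality hypothesis. Controllability holds since \(\det[\mbf{B}_0\;\,\mbf{A}_0\mbf{B}_0]=\alpha^2>0\). The observability matrix built from \(\mbf{C}_0\) and \(\mbf{C}_0\mbf{A}_0\) has determinant \(\gamma-\beta(1+\gamma)\), which vanishes exactly at the excluded value \(\beta=\gamma/(\gamma+1)\). So the realization is minimal in both cases of the lemma, and the PR lemma indeed gives \(\mbf{P}\succ0\).
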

    \vspace{2pt}
    To further examine the case of \(\beta = \frac{\gamma}{\gamma + 1}\), consider the transfer function of \(\bar{\bm{\mathcal{G}}}\) given by \(\bar{\mbf{G}}(z) = \bar{g}(z) \otimes \eye\), where
    \begin{equation}
        \bar{g}(z) = D + \alpha \mleft( \gamma + 1  \mright) \frac{\mleft( z - \frac{\gamma }{\gamma + 1}  \mright) }{\mleft( z - 1 \mright) \mleft( z -  \beta  \mright) }.
    \end{equation}
    \begin{remark}\label{rmk:case_of_equality_for_beta_gamma}
        For \(\beta = \frac{\gamma}{\gamma + 1}\), \(\bar{g}(z)\) has a pole-zero cancellation, resulting in \(\bar{g}(z) = D + \alpha (\gamma + 1)/(z - 1)\). This recovers the transfer function of the modified GD controller with stepsize \(\alpha (\gamma + 1)\) and feedthrough \(D\), which is positive real by \cite[Lemma 2]{moalemi_forbes_gd_passive_acc} if \(D \geq \alpha (\gamma + 1)/2\). Therefore, for \(\beta = \frac{\gamma}{\gamma + 1}\), \(\bar{\bm{\mathcal{G}}}\) is passive if \(D \geq \alpha (\gamma + 1)/2\). Notably, this condition is consistent with the results of \Cref{lemma:passivity_of_G_bar} for \(\beta < \frac{\gamma}{\gamma + 1}\) and \(\beta > \frac{\gamma}{\gamma + 1}\) in the limit as \(\beta \to \frac{\gamma}{\gamma + 1}\).
    \end{remark}

    Importantly, the results of \Cref{lemma:passivity_of_G_bar} are not dependent on the properties of \(f\), meaning for any choice of \(\alpha\), \(\gamma\), and \(\beta < 1\), there exists a sufficiently large \(D\) such that \(\bar{\bm{\mathcal{G}}}\) is passive. Moreover, using an approach similar to that used in the proof of \Cref{lemma:passivity_of_G_bar}, it is possible to determine the minimum feedthrough \(D\) required to make \(\bar{\bm{\mathcal{G}}}\) passive for the cases in which \(\bm{\mathcal{G}}\) represents GD (\(\gamma = \beta = 0\)), HB (\(\gamma = 0\)), and NAG (\(\gamma = \beta\)). The resulting lower bounds for any \(\alpha \in \mathbb{R}_{>0}\) and \(\beta \in [0, 1)\) are given in \Cref{tab:momentum_passivity_D}. Notably, for \(\beta = 0\), both \(D_{\mathrm{HB}}\) and \(D_{\mathrm{NAG}}\) reduce to \(D_{\mathrm{GD}}\), which is consistent with the fact that both HB and NAG reduce to GD when \(\beta = 0\).
    \begin{table}[t]
    \centering
    \vspace{10pt}
    \caption{Lower Bound on the Feedthrough Term \(D\) Needed to Render the Modified First-Order Optimization Controllers Passive.}
    \label{tab:momentum_passivity_D}
    \begin{tabularx}{\columnwidth}{%
        >{\centering\arraybackslash}p{0.18\columnwidth}
        >{\centering\arraybackslash}p{0.27\columnwidth}
        >{\centering\arraybackslash}p{0.45\columnwidth}
        }
        \hline\hline
        \addlinespace[2pt]
        \(\bar{\bm{\mathcal G}}_{\mathrm{GD}}\) 
        & 
        \(\bar{\bm{\mathcal G}}_{\mathrm{HB}}\) 
        & 
        \(\bar{\bm{\mathcal G}}_{\mathrm{NAG}}\)
        \\[2pt]
        \hline
        \addlinespace[3pt]
        \(
        D_{\mathrm{GD}} = \frac{\alpha}{2}
        \)
        &
        \(
        D_{\mathrm{HB}}=\frac{\alpha(1+\beta)}{2(1-\beta)^2}
        \)
        &
        \(
        D_{\mathrm{NAG}} = \alpha
        +
        \frac{\alpha(3\beta-1)}{2(1 - \beta)^2}
        \)
        \\\addlinespace[3pt]
        \hline\hline
    \end{tabularx}
    \vspace{-15pt}
    \end{table}
\subsection{Main Stability Result}
    This section presents a stability result based on the weak passivity theorem for the negative feedback interconnection in \Cref{fig:momentum_loop_transformation} and outlines the scope of the convergence guarantee it provides. In particular, \Cref{eqn:main_result} gives sufficient conditions for the shifted gradient to vanish asymptotically, but does not in general guarantee convergence of the iterates to the global minimizer. To strengthen this result, it is additionally assumed that \(f\) has a unique stationary point and that its gradient cannot become arbitrarily small far from this point. Under these assumptions, \Cref{eqn:main_result_convergence} establishes convergence of the iterates to the global minimizer.

    \begin{theorem}\label{eqn:main_result}
        Consider the first-order momentum-based method given by \cref{eqn:recursion_momentum_augmented} and its negative feedback representation in \Cref{fig:momentum_loop_transformation}, where
        \begin{equation}\label{eqn:lower_bound_on_D_for_stability}
                D =
            \begin{cases}
                \dfrac{\alpha \mleft(2 \gamma + 1\mright)}{2 \mleft(\beta + 1\mright)}, & \beta \leq \dfrac{\gamma}{\gamma + 1},\\[2ex]%
                \dfrac{\alpha }{\mleft(1 - \beta\mright)^2} - \dfrac{\alpha \mleft(2 \gamma + 1\mright)}{2 \mleft(1 - \beta\mright)}, & \dfrac{\gamma}{\gamma + 1} < \beta < 1.
            \end{cases}
        \end{equation}
        For \(f \in \mathcal{F}_{0, L}\), if \(\mbf{r}_1 = \mbf{0}\), \(\bar{\mbf{r}}_2 \in \ell_2\), and \(D < 1/L\), then \(\mbf{y}_{2}^{k} \to \mbf{0}\) as \(k \to \infty\). 
    \end{theorem}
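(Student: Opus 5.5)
The plan is to combine \Cref{lem:modified_nonlinearity_passivity} and \Cref{lemma:passivity_of_G_bar} through the weak passivity theorem (\Cref{thrm:weak_passivity}). This gives \(\mbf{y}_2 \in \ell_2\), and square-summability of the sequence \(\mbf{y}_2^k\) then forces \(\mbf{y}_2^k \to \mbf{0}\).

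\textbf{Step 1: passivity of the modified controller.} I would check that the specific value of \(D\) in \cref{eqn:lower_bound_on_D_for_stability} makes \(\bar{\bm{\mathcal{G}}}\) passive.
\begin{enumerate}[label=\textit{\roman*})]
\item For \(\beta < \gamma/(\gamma+1)\) and for \(\gamma/(\gamma+1) < \beta < 1\), this follows directly from \Cref{lemma:passivity_of_G_bar}, since \(D\) is chosen equal to the stated lower bound.
\item The boundary case \(\beta = \gamma/(\gamma+1)\) is not covered by \Cref{lemma:passivity_of_G_bar}, so I would invoke \Cref{rmk:case_of_equality_for_beta_gamma}. There \(\bar{\bm{\mathcal{G}}}\) is passive whenever \(D \geq \alpha(\gamma+1)/2\).
\item It remains to verify that the first branch of \cref{eqn:lower_bound_on_D_for_stability}, evaluated at \(\beta = \gamma/(\gamma+1)\), meets this bound. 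We have \(\beta + 1 = (2\gamma+1)/(\gamma+1)\), so
\[
\frac{\alpha(2\gamma+1)}{2(\beta+1)} = \frac{\alpha(\gamma+1)}{2},
\]
and the bound holds with equality.
\end{enumerate}
Hence \(\bar{\bm{\mathcal{G}}}\) is passive for all admissible \(\beta<1\).

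\textbf{Step 2: output strict passivity of the modified nonlinearity.} Since \(f \in \mathcal{F}_{0,L}\) and \(D < 1/L\), \Cref{lem:modified_nonlinearity_passivity} gives that \(\bar{\mbs{\Delta}}\) is OSP with \(\varepsilon = 1/L - D\) and \(\mu = 0\).

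\textbf{Step 3: apply the weak passivity theorem.} The loop-transformed interconnection in \Cref{fig:momentum_loop_transformation} is a negative feedback of \(\bar{\bm{\mathcal{G}}}\) and \(\bar{\mbs{\Delta}}\). Its exogenous inputs are \(\mbf{r}_1 = \mbf{0}\) and \(\bar{\mbf{r}}_2 \in \ell_2\). Applying \Cref{thrm:weak_passivity} with \(\bar{\bm{\mathcal{G}}}\) in place of \(\bm{\mathcal{G}}\) and \(\bar{\mbs{\Delta}}\) in place of \(\mbs{\Delta}\), the output of \(\bar{\mbs{\Delta}}\) lies in \(\ell_2\).

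I must confirm that this output is \(\mbf{y}_2\), the shifted gradient. In the loop transformation, the feedthrough \(D\eye\) is added to \(\bm{\mathcal{G}}\) and subtracted through positive feedback around \(\mbs{\Delta}\). The signal leaving \(\bar{\mbs{\Delta}}\) is therefore still \(\mbf{y}_2\), while only its input changes to \(\bar{\mbf{u}}_2 = \mbf{u}_2 - D\mbf{y}_2\). This matches the computation in the proof of \Cref{lem:modified_nonlinearity_passivity}. Well-posedness and the requirement that signals lie in \(\ell_{2e}\) are automatic, because the recursion \cref{eqn:recursion_momentum_augmented} is explicit and every finite truncation is finite.

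\textbf{Step 4: conclude convergence.} From \(\sum_k \|\mbf{y}_2^k\|_2^2 < \infty\), the terms of a convergent series tend to zero, so \(\mbf{y}_2^k \to \mbf{0}\).

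\textbf{Main obstacle.} The delicate step is the bookkeeping in Step 3: ensuring that the signals of \Cref{thrm:weak_passivity} map correctly onto the transformed loop, namely the sign conventions and the identification \(\bar{\mbf{r}}_2 = \mbf{r}_2 - D\mbf{r}_1\). A secondary point is the boundary case \(\beta = \gamma/(\gamma+1)\), which rests on \Cref{rmk:case_of_equality_for_beta_gamma}. I would also note that nonnegativity of the second branch of \(D\) follows from passivity itself, so it need not be checked separately.
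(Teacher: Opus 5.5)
Your proposal is correct and follows the paper's proof. You obtain passivity of $\bar{\bm{\mathcal{G}}}$ from \Cref{lemma:passivity_of_G_bar}, using \Cref{rmk:case_of_equality_for_beta_gamma} at $\beta=\gamma/(\gamma+1)$; your check that the first branch equals $\alpha(\gamma+1)/2$ there is a step the paper leaves implicit. You then get OSP of $\bar{\mbs{\Delta}}$ from \Cref{lem:modified_nonlinearity_passivity}, and \Cref{thrm:weak_passivity} gives $\mbf{y}_2\in\ell_2$, hence $\mbf{y}_2^k\to\mbf{0}$.

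One small aside: your closing remark that positivity of the second branch of $D$ \emph{follows from passivity} is circular as phrased, since \Cref{lemma:passivity_of_G_bar} already assumes $D>0$. Positivity is immediate anyway, because $(2\gamma+1)(1-\beta)<(2\gamma+1)/(\gamma+1)<2$ whenever $\beta>\gamma/(\gamma+1)$.
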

    \begin{proof}
        From \Cref{lemma:passivity_of_G_bar} and \Cref{rmk:case_of_equality_for_beta_gamma}, if \(D\) satisfies \cref{eqn:lower_bound_on_D_for_stability}, then \(\bar{\bm{\mathcal{G}}}\) is passive. Moreover, from \Cref{lem:modified_nonlinearity_passivity}, if \(D < 1/L\), then \(\bar{\mbs{\Delta}}\) is OSP\@. Therefore, from \Cref{thrm:weak_passivity}, if \(\mbf{r}_1 = \mbf{0}\) and \(\bar{\mbf{r}}_2 \in \ell_2\), then \(\mbf{y}_2 \in \ell_2\), which implies \(\mbf{y}_2^k \to \mbf{0}\) as \mbox{\(k \to \infty\)}~\cite{feedback_systems}.\hspace*{\fill}~\QED\par\endtrivlist\unskip
    \end{proof}

    Since \Cref{eqn:main_result} relies on the weak passivity theorem, it only provides sufficient conditions for \(\mbf{y}_2^k \to \mbf{0}\) and makes no claims regarding the output of the modified controller \(\bar{\bm{\mathcal{G}}}\). To further highlight the subtleties of this result, notice that for \(\mbf{r}_1 = \bar{\mbf{r}}_2 = \mbf{0}\) in \Cref{fig:momentum_loop_transformation}, the result \(\mbf{y}_2 \in \ell_2\) implies \(\mbf{y}_2^k = \nabla f(\mbf{u}_2^k + \mbf{x}^\ast) = \nabla f(\mbf{y}_1^k + \mbf{x}^\ast) = \nabla f(\mbf{C}\mbs{\xi}^k + \mbf{x}^\ast) \to \mbf{0}\) as \(k \to \infty\). From \cref{eqn:augmented_state} and \cref{eqn:momentum_controller_minimal_realization}, it follows that
    \begin{align}\label{eqn:convergence_of_gradient}
        \nabla f (\mbf{x}^k + \gamma ( \mbf{x}^k - \mbf{x}^{k-1} ) )  \to \mbf{0}, \quad \text{as } k \to \infty.
    \end{align}
    When \(f\) has multiple minimizers, \cref{eqn:convergence_of_gradient} cannot guarantee the convergence of \(\mbf{x}^k\) to the specific global minimizer \(\mbf{x}^\ast\). Moreover, even if \(f\) has a unique minimizer \(\mbf{x}^\ast\) such that \(\nabla f(\mbf{x}^\ast) = \mbf{0}\), \cref{eqn:convergence_of_gradient} does not guarantee \(\mbf{x}^k \to \mbf{x}^\ast\) as \(k \to \infty\). The following example illustrates this point.
    \begin{example}
        Let \(f(x) = x^2/(x^2 + 1)\) for \(x \in \mathbb{R}\). Then, \(\nabla f(x) = 2x/(x^2 + 1)^2\) and \(\nabla f(x) = 0\) if and only if \(x = 0\). However, for \(z^k = k\), \(\nabla f(z^k) \to 0\) as \(k \to \infty\), but \(z^k \to \infty\). 
    \end{example}

    To address these issues, the following assumption is made on the function \(f\).
    \begin{assumption}\label{assumption1}
        There exists a unique stationary point \(\mbf{x}^\ast\), and far from this point, the gradient cannot become arbitrarily small, meaning \(\liminf_{\|\mbf{x}\|\to\infty} \mleft\| \nabla f(\mbf{x}) \mright\| > 0\).
    \end{assumption}
    The subset of functions in \(\mathcal{F}_{0, L}\) satisfying \Cref{assumption1} is denoted by \(\hat{\mathcal{F}}_{0, L}\). Importantly, for \(0 < m \leq L\), 
    \begin{equation}
        \mathcal{S}_{m, L} \subset \mathcal{F}_{m, L} \subset \hat{\mathcal{F}}_{0, L} \subset \mathcal{F}_{0, L}.
    \end{equation}
    The following corollary of \Cref{eqn:main_result} provides a stronger convergence result for functions in \(\hat{\mathcal{F}}_{0, L}\). The proof is provided in \hyperref[appendix:proof_of_convergence_for_S_hat]{Appendix B}.
    \begin{corollary}
        \label{eqn:main_result_convergence}
        Consider the first-order momentum-based method given by \cref{eqn:recursion_momentum_augmented} and its negative feedback representation in \Cref{fig:momentum_loop_transformation}, with \(D\) defined in \cref{eqn:lower_bound_on_D_for_stability}. For \(f \in \hat{\mathcal{F}}_{0, L}\), if \(\mbf{r}_1 = \bar{\mbf{r}}_2 = \mbf{0}\) and \(D < 1/L\), then \(\mbf{y}_2^k = \nabla f(\mbf{x}^k + \gamma ( \mbf{x}^k - \mbf{x}^{k-1} )) \to \mbf{0}\) and \(\mbf{x}^k \to \mbf{x}^\ast\) as \(k \to \infty\).
    \end{corollary}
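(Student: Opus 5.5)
The plan is to get $\mbf{y}_2^k \to \mbf{0}$ directly from \Cref{eqn:main_result}, then use \Cref{assumption1} to show that the look-ahead points $\mbf{y}^k = \mbf{x}^k + \gamma(\mbf{x}^k - \mbf{x}^{k-1})$ converge to $\mbf{x}^\ast$. Finally, a stable first-order filter recovers $\mbf{x}^k \to \mbf{x}^\ast$.

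First, since $f \in \hat{\mathcal{F}}_{0,L} \subset \mathcal{F}_{0,L}$, $\mbf{r}_1 = \bar{\mbf{r}}_2 = \mbf{0} \in \ell_2$, $D < 1/L$, and $D$ is given by \cref{eqn:lower_bound_on_D_for_stability}, \Cref{eqn:main_result} gives $\mbf{y}_2 \in \ell_2$. Hence $\mbf{y}_2^k \to \mbf{0}$. By the discussion leading to \cref{eqn:convergence_of_gradient}, this is exactly $\nabla f(\mbf{y}^k) \to \mbf{0}$, which proves the first claim.

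Second, I show that $\{\mbf{y}^k\}$ is bounded. Since $\liminf_{\|\mbf{x}\|\to\infty}\|\nabla f(\mbf{x})\| > 0$, there exist $R>0$ and $c>0$ such that $\|\nabla f(\mbf{x})\| \geq c$ whenever $\|\mbf{x}\| \geq R$. Because $\nabla f(\mbf{y}^k) \to \mbf{0}$, there is $K$ with $\|\nabla f(\mbf{y}^k)\| < c$ for all $k \geq K$. Therefore $\|\mbf{y}^k\| < R$ for $k \geq K$, and the sequence is bounded.

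Next I argue $\mbf{y}^k \to \mbf{x}^\ast$ by a compactness argument. Suppose not. Then there is $\epsilon > 0$ and a subsequence with $\|\mbf{y}^{k_j} - \mbf{x}^\ast\| \geq \epsilon$. By boundedness and Bolzano--Weierstrass, this subsequence has a further convergent subsequence with limit $\bar{\mbf{y}} \neq \mbf{x}^\ast$. Continuity of $\nabla f$ gives $\nabla f(\bar{\mbf{y}}) = \lim \nabla f(\mbf{y}^{k_j}) = \mbf{0}$. This contradicts uniqueness of the stationary point in \Cref{assumption1}.

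Third, I transfer the convergence to the iterates. Let $\mbf{e}^k = \mbf{x}^k - \mbf{x}^\ast$ and $\mbf{w}^k = \mbf{y}^k - \mbf{x}^\ast = (1+\gamma)\mbf{e}^k - \gamma \mbf{e}^{k-1}$, which is the output $\mbf{C}\mbs{\xi}^k$. Then
\begin{equation*}
\mbf{e}^k = \tfrac{\gamma}{1+\gamma}\mbf{e}^{k-1} + \tfrac{1}{1+\gamma}\mbf{w}^k .
\end{equation*}
This is a scalar-coefficient linear recursion with pole $\rho = \gamma/(1+\gamma) \in [0,1)$, driven by an input $\mbf{w}^k \to \mbf{0}$. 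A standard argument then gives $\mbf{e}^k \to \mbf{0}$. Unrolling yields $\mbf{e}^k = \rho^k \mbf{e}^0 + \tfrac{1}{1+\gamma}\sum_{j=1}^{k}\rho^{k-j}\mbf{w}^j$. Split the sum at an index $N$ beyond which $\|\mbf{w}^j\| < \delta$. The tail is at most $\delta/((1+\gamma)(1-\rho)) = \delta$, and the head is $\rho^{k-N}$ times a fixed quantity, which vanishes. Hence $\mbf{x}^k \to \mbf{x}^\ast$.

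I expect the main care point to be the boundedness and compactness step: the liminf condition must be used to confine $\mbf{y}^k$ to a compact set. Note that the argument uses only continuity of $\nabla f$ and never uses the OSP bound again. By contrast, the final filtering step is routine because $\gamma \geq 0$ makes $\rho < 1$ regardless of $\beta$.
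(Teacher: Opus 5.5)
Your proof is correct. Its first two steps match the paper's. The paper also uses the liminf condition to get $r,c>0$ with $\|\nabla f(\mbf{z})\|\geq c$ for $\|\mbf{z}\|\geq r$. It then takes the minimum of $\|\nabla f\|$ over the compact set $\{\|\mbf{z}\|\leq r,\ \|\mbf{z}-\mbf{x}^\ast\|\geq\varepsilon\}$, which gives a uniform bound $c_\varepsilon>0$ away from $\mbf{x}^\ast$. Your Bolzano--Weierstrass argument is the sequential version of the same compactness fact.

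You differ in the last step. The paper works with the increments $\mbf{d}^k=\mbf{x}^k-\mbf{x}^{k-1}$, which obey $\mbf{d}^{k+1}=\beta\mbf{d}^k-\alpha\nabla f(\mbf{z}^k)$. This is a stable filter with pole $\beta<1$ driven by the vanishing gradient, so $\mbf{d}^k\to\mbf{0}$, and then $\mbf{x}^k=\mbf{z}^k-\gamma\mbf{d}^k\to\mbf{x}^\ast$. You instead invert the output map $\mbf{C}$, which gives a filter with pole $\gamma/(1+\gamma)$ driven by $\mbf{w}^k\to\mbf{0}$. That pole is the zero of the unmodified controller's transfer function $\bar{g}(z)-D$.

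Your route is purely kinematic. It uses only the definition of the look-ahead point, not the update law, and needs no condition on $\beta$. The paper's route uses the algorithm's dynamics and invokes $\beta<1$, which is in any case implicit in the definition of $D$. Both arguments are complete.
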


    \section{Discussion}\label{sec:discussion}
In this section, the passivity-based analysis of momentum-based methods is compared to existing results in the literature. The results discussed below follow from applying \Cref{eqn:main_result_convergence} with the lower bound on \(D\) provided in \Cref{tab:momentum_passivity_D}. In this context, guaranteed convergence refers to the convergence of the iterates to the global minimizer \(\mbf{x}^\ast\). 

For the special case of \(\beta = 0\) and \(\gamma = 0\), the recursion in \cref{eqn:recursion_momentum_augmented} reduces to the GD method, and \Cref{eqn:main_result_convergence} guarantees convergence for any stepsize \(\alpha \in (0, 2/L)\). Notably, unlike the analyses in \cite[Theorem 3]{moalemi_forbes_gd_passive_acc} and \cite[Example 5.4]{simpson}, both of which recover the same stepsize interval for convergence of the GD method, \Cref{eqn:main_result_convergence} does not require a strictly positive lower sector bound. On the other hand, if \(f\) is only assumed to be \(L\)-smooth and bounded from below, then Polyak's analysis in~\cite[Section~1.4.2]{Polyak} guarantees \(\nabla f(\mbf{x}^k) \to \mbf{0}\) for \(\alpha \in (0, 2/L)\). However, it can be shown that with the addition of \Cref{assumption1}, Polyak's analysis also guarantees convergence of the iterates to the global minimizer for the same stepsize interval.

In the case of \(\beta > 0\) and \(\gamma = 0\), the recursion in \cref{eqn:recursion_momentum_augmented} reduces to the HB method, and \Cref{eqn:main_result_convergence} guarantees convergence for any \(\alpha > 0\) and \(\beta \in [0, 1)\) such that
\begin{align}
    \alpha < \frac{2(1-\beta)^2}{L(1+\beta)}. \label{eqn:HB_convergence_condition}
\end{align}
Existing related HB results in the sector-bounded framework assume \(f \in \mathcal{S}_{m, L}\) or \(f \in \mathcal{F}_{m, L}\) with \(m > 0\). In \cite{ugrinovskii}, for \(f \in \mathcal{F}_{m, L}\), the HB method is shown to be globally convergent for any \(\alpha > 0\) and \(\beta \in [0, 1)\) such that
\begin{align}\label{eqn:HB_convergence_condition_ugrinovskii}
    \alpha <
    \begin{cases}
        \dfrac{2(1 + \beta)}{L}, & \beta \leq \bar{\beta}, \\[1ex]
        \dfrac{2(1-\beta)^2}{(1+\beta)(L+m)-4\sqrt{\beta mL}}, & \beta > \bar{\beta},
    \end{cases}
\end{align}
where \(\bar{\beta} = \left( \sqrt{\tfrac{L}{m}} - \sqrt{\tfrac{L}{m} - 1}\right)^2\). Notice, if \(m \to 0\), then \(\bar{\beta} \to 0\) and the first case in \cref{eqn:HB_convergence_condition_ugrinovskii} reduces to \mbox{\(\alpha < 2/L\)}, with \(\beta = 0\), which is the same convergence condition as that of the GD method. On the other hand, the second case in \cref{eqn:HB_convergence_condition_ugrinovskii} reduces to \cref{eqn:HB_convergence_condition}. Thus, although the result in \cite{ugrinovskii} is established only for \(m>0\), its convergence condition formally reduces to \cref{eqn:HB_convergence_condition} in the limit as \(m \to 0\). In this sense, \Cref{eqn:main_result_convergence} provides an explicit convergence guarantee for the case \mbox{\(m=0\)}.

Finally, for the case of \(\beta > 0\) and \(\gamma = \beta\), the recursion in \cref{eqn:recursion_momentum_augmented} reduces to the NAG method, and \Cref{eqn:main_result_convergence} guarantees convergence for any \(\alpha > 0\) and \(\beta \in [0, 1)\) such that \(D_{\mathrm{NAG}} < 1/L\), which is equivalent to
\begin{align}
    \alpha < \frac{2(1-\beta)^2}{L\mleft( 2 \beta^2 - \beta + 1 \mright)}. \label{eqn:NAG_convergence_condition}
\end{align}
To the best of the authors' knowledge, there are no existing results that provide an explicit sufficient convergence region for the NAG method for functions in \(\hat{\mathcal{F}}_{0, L}\). 

    \section{Numerical Results} \label{sec:simulation}
\begin{figure}[t]
    \centering
    \vspace{5pt}
    \includegraphics[width=0.9\columnwidth]{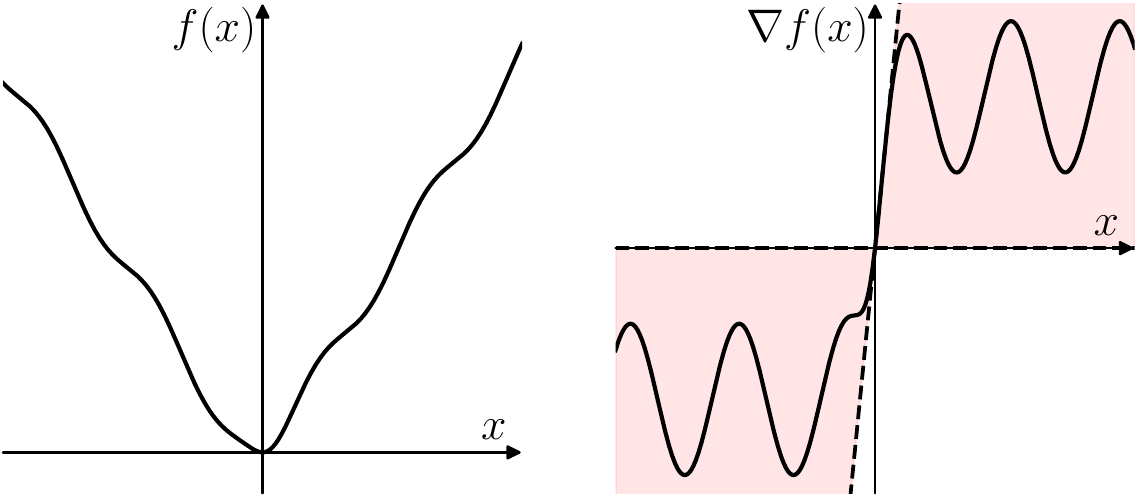}
    \caption{The function \(f \in \hat{\mathcal{F}}_{0, L}\) defined in \cref{example_function} with \(L\) given by \cref{eqn:L_for_example_function}.}
    \label{fig:sector_bound}
\end{figure}
\begin{figure}[t]
    \centering
    \includegraphics[width=0.95\columnwidth]{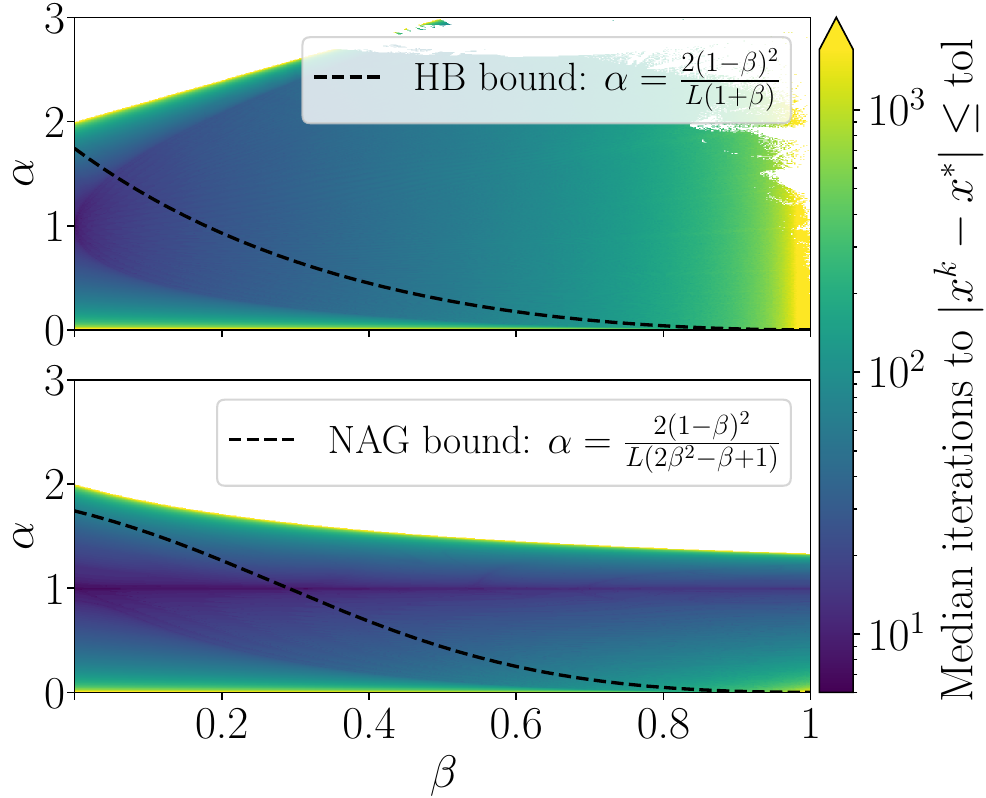}
    \vspace{-1pt}
    \caption{Heatmaps over the \((\beta,\alpha)\) plane for the HB (top) and NAG (bottom) methods applied to the test function in \Cref{fig:sector_bound}. Blank regions correspond to parameter values for which at least one initialization did not satisfy the stopping criterion within \(50{,}000\) iterations. The region below each dashed line is the guaranteed convergence region where \Cref{eqn:main_result_convergence} applies.}
    \label{fig:heatmaps}
\end{figure}
Consider the function \(f : \mathbb{R} \to \mathbb{R}\) defined by 
\begin{align}\label{example_function}
    f(x) &= \int_{0}^{x} \tanh(\tau) \mleft( 1 + \frac{\sin(\tau)}{2} \mright)  \mathrm{d}\tau.
\end{align}
Since \(f'(x) = \tanh(x) \mleft( 1 + \frac{\sin(x)}{2} \mright)\) and \(1 + \frac{\sin(x)}{2} > 0\) for all \(x \in \mathbb{R}\), it follows that \(f'(x) = 0\) if and only if \(x = x^\ast = 0\). Additionally, \(\liminf_{|x|\to\infty} |f'(x)| = \frac{1}{2} > 0\).
Therefore, \(f\) satisfies \Cref{assumption1}. As shown in \Cref{fig:sector_bound}, \(f\) is non-convex and \(f'(x)\) is in the sector \([0, L]\) with 
\begin{align}\label{eqn:L_for_example_function}
    L = \sup_{x \in \mathbb{R}} \frac{\tanh(x) \mleft( 1 + \frac{\sin(x)}{2} \mright)}{x} \approx 1.148.
\end{align}
Therefore, \(f \in \hat{\mathcal{F}}_{0, L}\), with \(L\) given by \cref{eqn:L_for_example_function}. As such, if the parameters \(\alpha\) and \(\beta\) of the HB and NAG methods satisfy \cref{eqn:HB_convergence_condition} and \cref{eqn:NAG_convergence_condition}, respectively, then \Cref{eqn:main_result_convergence} guarantees convergence of these methods to the unique minimizer \(x^\ast = 0\) from any initialization.

To assess how conservative these convergence conditions are, the HB and NAG methods were simulated over a grid of \(\alpha\) and \(\beta\) values. For each \((\alpha,\beta)\) pair, the methods were run from several logarithmically spaced initial conditions. A run was counted as successful if the iterates satisfied \(\|x^k-x^\ast\| \leq \mathrm{tol} = 10^{-9}\) within \(50{,}000\) iterations. Each grid point was then assigned the median iteration count across all initializations or left blank if any initialization failed to satisfy this criterion. As shown in \Cref{fig:heatmaps}, the convergence conditions provided by \Cref{eqn:main_result_convergence} are quite conservative for HB, but appear to be less conservative for NAG\@. Nevertheless, for both methods, the guaranteed convergence regions lie entirely within the empirically observed convergence regions.

    \section{Closing Remarks}\label{sec:closing_remarks}
The passivity-based analysis of first-order momentum-based methods presented in this paper is a two-step approach that separates the controller properties of the method from the function properties of the objective. In particular, \Cref{lemma:passivity_of_G_bar} provides a lower bound on the feedthrough term \(D\) that depends only on \(\alpha\), \(\beta\), and \(\gamma\), whereas \Cref{lem:modified_nonlinearity_passivity} provides the upper bound \(D<1/L\), which depends only on the sector bound of the gradient of \(f\). Consequently, for functions in \(\mathcal{F}_{0,L}\) and hyperparameters satisfying the conditions of \mbox{\Cref{eqn:main_result}}, the weak passivity theorem guarantees asymptotic vanishing of the shifted gradient. To strengthen this result, \Cref{assumption1} introduces the subclass \(\hat{\mathcal{F}}_{0,L}\), for which the objective has a unique stationary point and the gradient remains bounded away from zero at infinity. For \(f\in\hat{\mathcal{F}}_{0,L}\), \Cref{eqn:main_result_convergence} guarantees convergence of the iterates to the global minimizer. This broader function class distinguishes the present analysis from those in \cite{lessard_recht_iqc, hu_lessard, ugrinovskii, alex_petersen, hu_lessard_dissipativity_nag, triple_momentum, van2025fastest, moalemi_forbes_gd_passive_acc}, all of which assume either \(f \in \mathcal{S}_{m,L}\) or \(f \in \mathcal{F}_{m,L}\) with \(m>0\).

    \appendix
    \subsection{Proof of \texorpdfstring{\Cref{lemma:passivity_of_G_bar}}{Lemma 2}}\label{appendix:proof_of_passivity_of_G_bar}
Following \cite[Lemma 3]{anderson}, to show that \(\bar{\bm{\mathcal{G}}}\) can be made passive for a sufficiently large \(D\), it is sufficient to show that there exists a \(\mbf{P}=\mbf{P}^{\trans} \succ 0\) such that 
\begin{equation}\label{eqn:positive_real}
    \mbf{M} =
    \begin{bmatrix}
        \mbf{A}^{\trans} \mbf{P} \mbf{A} - \mbf{P} & \mbf{A}^{\trans} \mbf{P} \mbf{B} - \mbf{C}^{\trans} \\
        \mleft( \mbf{A}^{\trans} \mbf{P} \mbf{B} - \mbf{C}^{\trans} \mright)^{\trans}  & \mbf{B}^{\trans} \mbf{P} \mbf{B} - \mleft( \mbf{D} + \mbf{D}^{\trans} \mright)
    \end{bmatrix}
    \preceq 0,
\end{equation}
with
\begin{align*}
    \mbf{A} &= \mbf{A}_0 \otimes \eye =
    \begin{bmatrix}
        1 + \beta & -\beta \\
        1         & 0
    \end{bmatrix}\otimes \eye,\;\;\;
    \mbf{B} = \mbf{B}_0 \otimes \eye =
    \begin{bmatrix}
        \alpha \\
        0
    \end{bmatrix}\otimes \eye,\\
    \mbf{C} &= \mbf{C}_0 \otimes \eye =
    \begin{bmatrix}
        1 + \gamma & -\gamma
    \end{bmatrix}\otimes \eye, \;\;\;\;
    \mbf{D} = D\eye.
\end{align*}
To simplify the analysis, the scalar case of \(x \in \mathbb{R}\) is considered, since if \(\mbf{P}_0 \succ 0\) and \(\mbf{M}_0 \preceq 0\), then \(\mbf{P} = \mbf{P}_0 \otimes \eye \succ 0\) and \(\mbf{M} = \mbf{M}_0 \otimes \eye \preceq 0\) as well~\cite[Theorem~4.2.15]{Horn_Johnson_1991}. Let
\begin{equation}
    \mbf{P} = 
    \begin{bmatrix}
        p_{1} & p_{2} \\
        \ast  & p_{3}
    \end{bmatrix},
\end{equation}
where \(p_1, p_2, p_3 \in \mathbb{R}\) and \(\ast\) denotes the symmetric term \(p_2\). 

\noindent The matrix \(\mbf{P}\) is positive definite if and only if 
\begin{align}\label{eqn:P_being_pd}
    p_1 &> 0, & p_1 p_3 - p_2^2 &> 0.
\end{align}
By the principal-minor test~\cite{gantmacher1959theory}, \(\mbf{M} \preceq 0\) if and only if
\begin{subequations}
    {%
    \begin{align}
        2p_{2} + p_{3} + 2\beta p_{1} + 2\beta p_{2} + \beta^{2}p_{1} &\leq 0, \label{eqn:tm_passivity_row_1}
        \\%
        \beta^{2}p_{1}                                                &\leq p_{3}, \label{eqn:tm_passivity_row_2}
        \\%
        \alpha^{2}p_{1}                                               &\leq 2D, \label{eqn:tm_passivity_row_3}
        \\%
        p_{2} + p_{3} + \beta p_{1} + \beta p_{2}                     &= 0, \label{eqn:tm_passivity_row_4}
    \end{align}
    \begin{equation}
        \begin{split}
            &2D \left( 2p_{2} + p_{3} + 2\beta (p_{1} + p_{2}) + \beta^{2}p_{1} \right) + 1\\%
            &\quad\quad- \gamma \left( 2\alpha \left( p_1 + \beta p_1 + p_2 \right) - \gamma - 2 \right)  \\%
            &\quad\quad+ \alpha^{2}(p_{1}^{2} + p_{2}^{2} - p_{1}p_{3})\leq 2\alpha(p_{2} + p_{1} + \beta p_{1}),
        \end{split}
    \end{equation}
    \begin{equation}
        \gamma^2 - 2\alpha\beta\gamma p_1 + \alpha^{2}p_{1}p_{3}     \leq 2D \left( p_{3} - \beta^{2}p_{1} \right), \label{eqn:tm_passivity_row_6}
    \end{equation}
    \begin{equation}\label{eqn:tm_passivity_row_7}
        \begin{split}
            &\alpha^{2}(p_{1}^{2}p_{3} - p_{1}p_{2}^{2} - p_{1}p_{3}^{2} + p_{2}^{2}p_{3})\\%
            &\quad\quad- \beta^{2}p_{1} + p_{3} + 2 \gamma \left( p_2 + p_3 + \beta \left( p_1 + p_2 \right)\right)
            \\%
            &\quad\quad- 2\alpha\gamma \left( p_2^2 + p_1p_2 + p_1p_3 + p_2p_3 + \beta \left( p_1 + p_2 \right)^2  \right) \\%
            &\quad\quad- 2\alpha
            \left(
                \beta p_{1}p_{2} + \beta p_{1}p_{3} + p_{1}p_{3} + p_{2}p_{3}
            \right) \\%
            &\quad\quad+ 2D\beta^{2}\left(p_{1}^{2} + 2p_{1}p_{2} + p_{2}^{2}\right) \\%
            &\quad\quad+ 4D\beta \left(p_{1}p_{2} + p_{1}p_{3} + p_{2}^{2} + p_{2}p_{3}\right) \\%
            &\quad\quad+ 2D \left(p_{2}^{2} + 2p_{2}p_{3} + p_{3}^{2}\right)
            \leq 0.
        \end{split}
    \end{equation}
    }%
\end{subequations}
From \cref{eqn:P_being_pd}, \cref{eqn:tm_passivity_row_2}, and \cref{eqn:tm_passivity_row_4}, it can be shown that \(\beta^2 p_1 = p_3\) leads to a contradiction. As such, it follows that
\begin{equation}\label{eqn:tm_passivity_row_2_strict}
        \beta^{2}p_{1} < p_{3}.
\end{equation}
From \cref{eqn:tm_passivity_row_3}, consider the case where \(D = \alpha^2 p_1 / 2\). Assuming \(\beta < \frac{\gamma}{1 + \gamma} \), it can be shown that for
\begin{equation}\label{eqn:tm_passivity_conditions_for_beta_less_than_gamma_over_gamma_plus_1}
\begin{gathered}
\begin{aligned}
p_1 &= \frac{\gamma}{\alpha \beta}, &
p_2 &= \frac{\beta - \gamma}{\alpha \beta}, &
p_3 &= \frac{\gamma - \beta - \beta^2}{\alpha \beta}, &
D &= \frac{\alpha \gamma}{2 \beta},
\end{aligned}
\end{gathered}
\end{equation}
there exists a \(\mbf{P} \succ 0\) such that \(\mbf{M} \preceq 0\). Instead of assuming \(D = \alpha^2 p_1 / 2\), consider the case where \(\gamma^2 - 2\alpha\beta\gamma p_1 + \alpha^{2}p_{1}p_{3} = 2D \mleft( p_{3} - \beta^{2}p_{1} \mright)\) in \cref{eqn:tm_passivity_row_6}. It follows that 
\begin{equation}\label{eqn:tm_passivity_D_based_on_p1_p3}
    \frac{\gamma^2 - 2\alpha\beta\gamma p_1 + \alpha^{2}p_{1}p_{3}}{2(p_{3} - \beta^{2}p_{1})} = D.
\end{equation}
Moreover, from \cref{eqn:tm_passivity_row_4}, it follows that
\begin{equation}\label{eqn:tm_passivity_p_2_in_terms_of_p1_p3}
    p_2 = - \frac{p_3 + \beta p_1}{1 + \beta}.
\end{equation}
Substituting \cref{eqn:tm_passivity_D_based_on_p1_p3} and \cref{eqn:tm_passivity_p_2_in_terms_of_p1_p3} into \cref{eqn:tm_passivity_row_7} and completing the square yields
\begin{equation}\label{eqn:tm_passivity_row_7_simplified}
    \mleft( p_3 - \beta^2 p_1 \mright)\mleft( \beta - \alpha p_1 + \alpha p_3 + 1 \mright)^2 \leq 0.
\end{equation}
As per \cref{eqn:tm_passivity_row_2_strict}, \(p_3 - \beta^2 p_1 > 0\). Consequently, \cref{eqn:tm_passivity_row_7_simplified} implies that
\begin{equation}\label{eqn:tm_passivity_p1_in_terms_of_p3}
    p_1 = \frac{1 + \beta}{\alpha} + p_3.
\end{equation}
Assuming \(\beta < 1\), substituting \cref{eqn:tm_passivity_p1_in_terms_of_p3} into \cref{eqn:tm_passivity_row_2_strict}, it follows that
\begin{equation}\label{eqn:tm_passivity_p3_lower_bound}
    \frac{\beta^2}{\alpha \mleft( 1 - \beta \mright)} < p_3.
\end{equation}
Moreover, substituting \cref{eqn:tm_passivity_p1_in_terms_of_p3} into \cref{eqn:tm_passivity_D_based_on_p1_p3} yields
\begin{equation}\label{eqn:tm_passivity_D_based_on_p3}
    D = \frac{-\alpha^3p_3^2 + (2\beta\gamma - 1 - \beta)\alpha^2 p_3 + \mleft( 2\beta^2 + 2\beta - \gamma \mright)\alpha \gamma}{2 \mleft( (\beta^2 - 1)\alpha p_3 + \beta^3 + \beta^2 \mright)}.
\end{equation}
What remains is to find an expression for \(p_3\) such that \(D\) is minimized. To this end, the analysis is split into two cases based on the value of \(\beta\).

\noindent\textbf{Case 1:} $\beta < \frac{\gamma}{\gamma+1}$.

\noindent It can be shown that for \(p_3 = \frac{\gamma - \beta}{\alpha}\), the value of \(D\) in \cref{eqn:tm_passivity_D_based_on_p3} is minimized and given by
\begin{equation}
    D = \frac{\alpha(2\gamma + 1)}{2(\beta + 1)}.
\end{equation}
Notably, for \(\beta < \frac{\gamma}{\gamma+1}\), the value of \(D\) in \cref{eqn:tm_passivity_D_based_on_p3} is strictly smaller than that of \cref{eqn:tm_passivity_conditions_for_beta_less_than_gamma_over_gamma_plus_1}. As such, if \(\beta < \frac{\gamma}{1 + \gamma} \), for
\begin{align}\label{eqn:tm_passivity_conditions_case_1}
    p_1 &= \frac{1 + \gamma}{\alpha}, 
    & 
    p_2 &= -\frac{\gamma}{\alpha}, 
    & 
    p_3 &= \frac{\gamma - \beta}{\alpha}, 
    & 
    D &= \frac{\alpha(2\gamma + 1)}{2(\beta + 1)},
\end{align}
there exists a \(\mbf{P} \succ 0\) such that \(\mbf{M} \preceq 0\). 

\noindent\textbf{Case 2:} $\frac{\gamma}{\gamma+1} < \beta < 1$.

\noindent Similar to Case 1, it can be shown that for 
\begin{equation}
    p_3 = \frac{\beta - \gamma + \beta \gamma + \beta^2}{\alpha \mleft( 1 - \beta \mright) },
\end{equation}
the value of \(D\) in \cref{eqn:tm_passivity_D_based_on_p3} is minimized and given by
\begin{equation}
    D = \frac{\alpha }{\mleft(1 - \beta\mright)^2} - \frac{\alpha \mleft(2 \gamma + 1\mright)}{2 \mleft(1 - \beta\mright)}.
\end{equation}
As such, if \(\frac{\gamma}{\gamma+1} < \beta < 1\), for
\begin{equation}\label{eq:tm_passivity_conditions_case_2}
    \begin{gathered}
        p_1 = \frac{2}{\alpha(1-\beta)} - \frac{1+\gamma}{\alpha}, \quad
        p_2 = \frac{2+\gamma}{\alpha} - \frac{2}{\alpha(1-\beta)},\\
        p_3 = \frac{\beta - \gamma + \beta\gamma + \beta^2}{\alpha(1-\beta)}, \quad
        D   = \frac{\alpha}{(1-\beta)^2} - \frac{\alpha(2\gamma+1)}{2(1-\beta)},
    \end{gathered}
\end{equation}
there exists a \(\mbf{P} \succ 0\) such that \(\mbf{M} \preceq 0\). Combining \cref{eqn:tm_passivity_conditions_case_1} and \cref{eq:tm_passivity_conditions_case_2} yields the desired result.\hspace*{\fill}~\QED\par\endtrivlist\unskip

\subsection{Proof of \texorpdfstring{\Cref{eqn:main_result_convergence}}{Corollary 1}}\label{appendix:proof_of_convergence_for_S_hat}
As per \Cref{eqn:main_result}, for \(f \in \hat{\mathcal{F}}_{0, L} \subset \mathcal{F}_{0, L}\) and \(D\) defined in \cref{eqn:lower_bound_on_D_for_stability}, if \(\mbf{r}_1 = \bar{\mbf{r}}_2 = \mbf{0}\), then \(\mbf{y}_2^k \to \mbf{0}\) as \(k \to \infty\). Moreover, if \(\mbf{r}_1 = \bar{\mbf{r}}_2 = \mbf{0}\) in \Cref{fig:momentum_loop_transformation}, then \(\mbf{y}_2^k = \nabla f(\mbf{u}_2^k + \mbf{x}^\ast) = \nabla f(\mbf{y}_1^k + \mbf{x}^\ast) = \nabla f(\mbf{C}\mbs{\xi}^k + \mbf{x}^\ast)\). From \cref{eqn:augmented_state} and \cref{eqn:momentum_controller_minimal_realization}, it follows that
\begin{align}
    \nabla f (\mbf{z}^k )  \to \mbf{0}, \quad \text{as } k \to \infty,
\end{align}
where \(\mbf{z}^k = \mbf{x}^k + \gamma ( \mbf{x}^k - \mbf{x}^{k-1} )\). Since \(f \in \hat{\mathcal{F}}_{0, L} \), from \Cref{assumption1}, it follows that \(\liminf_{\|\mbf{z}\|\to\infty} \mleft\| \nabla f(\mbf{z}) \mright\| > 0\). As such, there exist \(r > 0\) and \(c > 0\) such that \(\mleft\| \nabla f(\mbf{z}) \mright\| \geq c\) for all \(\mleft\| \mbf{z} \mright\| \geq r\). For a fixed \(\varepsilon > 0\), define the compact set
\begin{align}
    \mathcal{R}_{\varepsilon} = \mleft\{ \mbf{z} \in \mathbb{R}^n \mid \mleft\| \mbf{z} \mright\| \leq r, \mleft\| \mbf{z} - \mbf{x}^\ast \mright\| \geq \varepsilon \mright\}. 
\end{align}
Since \(\mbf{x}^\ast\) is the unique stationary point of \(f\), \(\nabla f(\mbf{z})\) is continuous, and \(\mathcal{R}_{\varepsilon}\) is compact, it follows that 
\begin{align}
    m_{\varepsilon} = \min_{\mbf{z} \in \mathcal{R}_{\varepsilon}} \mleft\| \nabla f(\mbf{z}) \mright\| > 0.
\end{align}
Consequently, every point \(\mbf{z}\) with \(\mleft\| \mbf{z} - \mbf{x}^\ast \mright\| \geq \varepsilon\) satisfies \(\mleft\| \nabla f(\mbf{z}) \mright\| \geq c_{\varepsilon} > 0\), where \(c_{\varepsilon} = \min \mleft\{ m_{\varepsilon}, c \mright\}\). Equivalently, every \(\mbf{z} \in \mathbb{R}^{n}\) such that \(\mleft\| \nabla f(\mbf{z}) \mright\| < c_{\varepsilon}\) satisfies \mbox{\(\mleft\| \mbf{z} - \mbf{x}^\ast \mright\| < \varepsilon\)}. Since \(\nabla f(\mbf{z}^k) \to \mbf{0}\), it follows that \(\mbf{z}^k \to \mbf{x}^\ast\) as \(k \to \infty\).

Next, to show that \(\mbf{x}^k \to \mbf{x}^\ast\) as \(k \to \infty\), let \(\mbf{d}^k = \mbf{x}^k - \mbf{x}^{k-1}\). From \cref{eqn:momentum_controller_minimal_realization}, it follows that 
\begin{align}
    \mbf{d}^{k+1} = \beta \mbf{d}^k - \alpha \nabla f(\mbf{z}^k).
\end{align}
Since \(\nabla f(\mbf{z}^k) \to \mbf{0}\) and \(\beta < 1\), it follows that \(\mbf{d}^k \to \mbf{0}\) as \(k \to \infty\). Finally, from the definition of \(\mbf{z}^k\), it follows that \(\mbf{x}^k = \mbf{z}^k - \gamma \mbf{d}^k\). Since \(\mbf{z}^k \to \mbf{x}^\ast\) and \(\mbf{d}^k \to \mbf{0}\), it follows that \(\mbf{x}^k \to \mbf{x}^\ast\) as \(k \to \infty\).\hspace*{\fill}~\QED\par\endtrivlist\unskip

    \printbibliography
\end{document}